\def\ps@headings{%
\def\@oddhead{\mbox{}\scriptsize\rightmark \hfil \thepage}%
\def\@evenhead{\scriptsize\thepage \hfil \leftmark\mbox{}}%
\def\@oddfoot{}%
\def\@evenfoot{}}
\newtheorem{theorem}{Theorem}
\newenvironment{definition}[1][Definition:]{\begin{trivlist}
\item[\hskip \labelsep {\bfseries #1}]}{\end{trivlist}}
\begin{document}

%\title{Minimizing Failure in Interdependent Power-Communication Networks by Entity Hardening}
%\title{Improving failure tolerance of Multi-Layer Interdependent Networks by Entity Hardening Under Budget Constraints}
%\title{Budget Constrained Network Entity Hardening in Multi-layered Interdependent Networks}
\title{On the Entity Hardening Problem in Multi-layered Interdependent Networks}

\author{\IEEEauthorblockN{Joydeep Banerjee, Arun Das, Chenyang Zhou, Anisha Mazumder and Arunabha Sen}
\IEEEauthorblockA{ Computer Science and Engineering Program\\
\small School of Computing, Informatics and Decision System Engineering\\
\small Arizona State University\\
\small Tempe, Arizona 85287\\
\small Email: \{joydeep.banerjee, arun.das, czhou24, anisha.mazumder, asen\}@asu.edu}
}

\maketitle

\begin{abstract}
The power grid and the communication network are highly interdependent on each other for their well being. In recent times the research community has shown significant interest in modeling such interdependent networks and studying the impact of failures on these networks. Although a number of models have been proposed, many of them are simplistic in nature and fail to capture the complex interdependencies that exist between the entities of these networks. To overcome the limitations, recently an \emph{Implicative Interdependency Model} that utilizes Boolean Logic, was proposed and a number of problems were studied. In this paper we study the ``\emph{entity hardening}'' problem, where by ``\emph{entity hardening}'' we imply the ability of the network operator to ensure that an adversary (be it Nature or human) cannot take a network entity from \emph{operative} to \emph{inoperative} state. Given that the network operator with a limited budget can only harden $k$ entities, the goal of the entity hardening problem is to identify the set of $k$ entities whose hardening will ensure maximum benefit for the operator, i.e. maximally reduce the ability of the adversary to degrade the network. We show that the problem is solvable in polynomial time for some cases, whereas for others it is NP-complete. We provide the optimal solution using ILP, and propose a heuristic approach to solve the problem. We evaluate the efficacy of our heuristic using power and communication network data of Maricopa County, Arizona. The experiments show that our heuristic almost always produces near optimal results.

\end{abstract}

\section{Introduction}
The critical infrastructures of a nation form a complex symbiotic ecosystem where individual infrastructures are heavily interdependent on each other for being fully functional. Two such critical systems that rely heavily on each other for their well being are the power and communication network infrastructures. For instance, power grid entities such as SCADA systems, that are used to remotely operate power generation units, receive their control commands over the communication network infrastructure, while communication network entities such as routers and base stations are inoperable without electric power. Thus, failure introduced in the system either by Nature (hurricanes), or man (terrorist attacks), can trigger further failures in the system due to interdependencies between the entities of the two infrastructures.

Although a number of models have been proposed for modeling and analysis of interdependent multi-layered networks \cite{Bul10, Gao11, Sha11, Ros08, Zha05, Par13, Ngu13, Zus11}, many of these models are simplistic in nature and fail to capture the complex interdependencies that exists between the entities of these networks. As noted in \cite{sen2014identification}, these models fail to model complex interdependencies that may exist between network entities, such as when entity $a_i$ is operational, if entities (i) $b_j$ {\em and} $b_k$ {\em and} $b_l$ are operational, {\em or} (ii) $b_m$ {\em and} $b_n$ are operational, {\em or} (iii) $b_p$ is operational. Graph based interdependency models proposed in the literature such as \cite{Sha11, Ros08, Zha05, Cas13, Par13, Ngu13} including \cite{Bul10, Gao11} cannot capture such complex interdependency involving both conjunctive and disjunctive terms between entities of multi-layer networks. To overcome these limitations, an \emph{Implicative Interdependency Model} that utilizes Boolean Logic, was recently proposed in \cite{sen2014identification}, and a number of problems including \emph{computation of $\cal{K}$ most vulnerable nodes} \cite{sen2014identification}, \emph{root cause of failure analysis} \cite{das2014root}, and \emph{progressive recovery from failures} \cite{mazumder2014progressive}, were studied using this model. 

In this paper we study the ``\emph{entity hardening}'' problem in the interdependent power-communication network using the \emph{Implicative Interdependency Model} (IIM). By ``\emph{entity hardening}'', we imply the ability of the network operator to ensure that an adversary (be it Nature or human), cannot take a network entity from an \emph{operative} (operational) to an \emph{inoperative} (failed) state. We assume that the adversary is clever and is capable of identifying the most vulnerable entities in the network that causes maximum damage to the interdependent system. However, the adversary does not have an unlimited budget and has the resources to destroy at most $\cal{K}$ entities of the interdependent network. The network operator is also aware of adversary's target entities for destruction. Since we assume that once an entity is ``\emph{hardened}'' by the network operator it cannot be destroyed by the adversary, if all $\cal{K}$ targets of the adversary are hardened by the network operator, then the adversary cannot induce any failure in the network. However, if due to resource limitations the network operator is able to strengthen only $k$ entities, where $k < \cal{K}$, these $k$ entities have to be carefully chosen. The goal of the entity hardening problem is to identify the set of $k$ entities whose hardening will ensure maximum benefit for the operator, i.e. maximally reduce the ability of the adversary to degrade the network. 

We classify the entity hardening problem into four different cases depending on the nature of the interdependency relationships. We show that the first case can be solved in polynomial time, and all other cases are shown to be NP-complete. We provide an \emph{inapproximability} result for the second case, an \emph{approximation algorithm} for the third case, and a heuristic for the fourth (general) case. We evaluate the efficacy of our heuristic using power and communication network data of Maricopa County, Arizona. The experiments show that our heuristic almost always produces near optimal results.

The paper is organized as follows, the IIM model is presented in Section \ref{IIM_Section}, in Sections \ref{ProbForm} and \ref{CompAna} we formally state the entity hardening problem and analyze its computational complexity, Section \ref{Solutions} outlines the optimal and heuristic solutions to the problem, Section \ref{ExpRes} shows the experimental results, and finally Section \ref{Conclusion} concludes this paper.

\section{Interdependency Model}
\label{IIM_Section}
We now present an overview of the underlying IIM interdependency model \cite{sen2014identification}. IIM uses Boolean Logic to model the interdependencies between network entities, these interdependent relationships are termed as {\em Implicative Interdependency Relations} (IDRs). We represent this interdependent network setting as $\mathcal{I}(A,B,\mathcal{F}(A,B))$, where sets $A$ and $B$ are the power and communication network entities respectively, and $\mathcal{F}(A,B)$ is the set of dependency relations, or IDRs. Table \ref{tbl:example1idr} represents a sample interdependent network $\mathcal{I}(A,B,\mathcal{F}(A,B))$, where $A=\{a_1,a_2,a_3,a_4\}$, $B=\{b_1,b_2,b_3\}$ and $\mathcal{F}(A,B)$ is the set of IDRs (dependency relations) between the entities of $A$ and $B$. In this example, the IDR $b_1 \leftarrow a_1a_3 + a_2$ implies that entity $b_1$ is operational when both the entities $a_1$ \emph{and} $a_3$ are operational, \emph{or} entity $a_2$ is operational. The conjunction of entities, such as $a_1a_3$, is also referred to as a \emph{minterm}.

\begin{table}[H]
\begin{center}
\begin{tabular}{|l||l|}  \hline
{\bf Power Network} & {\bf Comm. Network} \\ \hline
$a_1\leftarrow b_1b_2$ & $b_1 \leftarrow a_1a_3 + a_2$ \\ \hline
$a_2 \leftarrow  b_1 + b_2$ & $b_2 \leftarrow a_1a_2a_3$ \\ \hline
$a_3 \leftarrow b_1 + b_2 + b_3$ & $b_3 \leftarrow a_1 + a_2 + a_3$ \\ \hline
$a_4 \leftarrow b_1 + b_3$ & $--$ \\ \hline
\end{tabular}
%\vspace{0.02in}
\caption{Implicative Interdependency Relations of a sample network}
\protect\label{tbl:example1idr}
\end{center}
\vspace{-25pt}
\end{table}

Given a set of inoperable (failed) entities, a time stepped failure cascade can be derived from the dependency relationships outlined in the IDR set. For example, for the interdependent network outlined in Table \ref{tbl:example1idr},  Table \ref{tbl:example1cascade} shows the failure propagation when entities $\{a_2, b_3\}$ fail at the initial time step ($t=0$). It may be noted that the model assumes that dependent entities fail immediately in the next time step, for example, when $\{a_2, b_3\}$ fail at $t=0$, $b_2$ fails at $t=1$ as $b_2$ is dependent on $a_2$ for its survival. The system reaches a \emph{steady state} when the failure propagation process stops. In this example, when $\{a_2, b_3\}$ fail at $t=0$, the steady state is reached at time step $t=4$.

\begin{table}[h]
\begin{center}
\begin{tabular}{|c|c|c|c|c|c|c|c|}  \hline
\multicolumn{1}{|c|}{Entities} & \multicolumn{7}{c|}{Time Steps ($t$)}\\
\cline{2-8} & $0$ & $1$ & $2$ & $3$ & $4$ & $5$ & $6$ \\\hline \hline
$a_1$ & $0$ & $0$ & $1$ & $1$ & $1$ & $1$ & $1$ \\ \hline
$a_2$ & $1$ & $1$ & $1$ & $1$ & $1$ & $1$ & $1$ \\ \hline
$a_3$ & $0$ & $0$ & $0$ & $0$ & $1$ & $1$ & $1$ \\ \hline
$a_4$ & $0$ & $0$ & $0$ & $0$ & $1$ & $1$ & $1$ \\ \hline
$b_1$ & $0$ & $0$ & $0$ & $1$ & $1$ & $1$ & $1$ \\ \hline
$b_2$ & $0$ & $1$ & $1$ & $1$ & $1$ & $1$ & $1$ \\ \hline
$b_3$ & $1$ & $1$ & $1$ & $1$ & $1$ & $1$ & $1$ \\ \hline
%{\em Optimal} & $a_0$ & $a_1$ &  $\cdots$ & $\cdots$ &  $a_k$ \\ \hline
%{\em Greedy } & $b_0$ & $b_1$ & $\cdots$ & $\cdots$ &  $b_k$ \\ \hline
\end{tabular}
%\vspace{0.02in}
\caption{Failure cascade propagation when entities $\{a_2, b_3\}$ fail at time step $t=0$. A value of $1$ denotes entity failure, and $0$ otherwise}
\protect\label{tbl:example1cascade}
\end{center}
\vspace{-20pt}
\end{table}

A primary consideration for using this model is the accurate formulation of the IDRs that is representative of the underlying physical power and communication network infrastructures. This can either be done by careful analysis as done in \cite{Zus11}, or by consultation with experts of these infrastructures. We utilize IIM to model the interdependency between the two networks and analyze the entity hardening problem in this setting.

\section{Problem Formulation}
\label{ProbForm}
%In this section we first elaborate on the entity hardening problem in the IIM setting, then state the problem formally, and finally proceed to analyze the computational complexity of the problem. 

Before we make a formal statement of the entity hardening problem in the IIM setting, we explain it with the help of an example. Consider an interdependent system as outlined in the IDR set shown in Table \ref{tbl:example1idr}. It may be easily checked that when the adversary budget is $\cal{K}$$=2$, the most vulnerable entities of this system are $\{a_2, b_3\}$. If the network operator doesn't harden any one of the entities $a_2$ or $b_3$, then in this example all the network entities eventually fail, as seen from the fault propagation in Table \ref{tbl:example1cascade}. When the network operator chooses to harden both $a_2$ and $b_3$ then none of the entities in the network fail if the adversary restricts the attack only to the two most vulnerable entities of the network, which in this example happens to be $\{a_2, b_3\}$. If the network operator has resources to harden only one entity and the operator chooses to harden $a_2$, the destruction of $b_3$ by the adversary will eventually lead to the failure of no other entities of the network, as shown in Table \ref{tbl:example1modcascade}(\subref{tbl:example1modcascade_a}). If on the other hand, the network operator chooses to harden $b_3$, destruction by the adversary of $a_2$ will eventually lead to the failure of the entities $\{a_2,b_2,a_1,b_1\}$ as shown in Table \ref{tbl:example1modcascade}(\subref{tbl:example1modcascade_b}). Clearly in this scenario the operator should harden $a_2$ instead of $b_3$.

% ARUN: insert propagation tables

%\begin{definition}
%Kill Set of an Entity ($a_i$): The kill set of an entity $a_i$ is set of the entities of the network that will eventually fail due to failure of $a_i$ and interdependency between the entities of the network as given by the set of IDR's, and is denoted by $KillSet(a_i)$.
%\end{definition}

\begin{definition}
\emph{Kill Set of a set of Entities($S$):} The kill set of a set of entities $S$, is the set of all entities that will eventually fail due to failure of $S$ and the interdependencies between the entities of the network as given by the set of IDR's. The kill set of a set of entities $S$ is denoted by $KillSet(S)$.
\end{definition}

It may be noted that the search for $k$ entities to be hardened is restricted to the $KillSet(S)$, where $S$ is the set of $\cal{K}$ most vulnerable entities in the network, because hardening any entity not in $KillSet(S)$ does not provide any benefit to the network operator. In this study we also assume that the set of $\cal{K}$ most vulnerable entities in the network is \emph{unique}.

\begin{table}[H]
\begin{center}
	\begin{subtable}[t]{0.45\columnwidth}
		\begin{tabular}{|c|c|c|c|c|c|}  \hline
		\multicolumn{1}{|c|}{Entities} & \multicolumn{5}{c|}{Time Steps ($t$)}\\
		\cline{2-6} & $0$ & $1$ & $2$ & $3$ & $4$ \\\hline \hline
		$a_1$ & $0$ & $0$ & $0$ & $0$ & $0$ \\ \hline
		$a_2$ & $*$ & $*$ & $*$ & $*$ & $*$ \\ \hline
		$a_3$ & $0$ & $0$ & $0$ & $0$ & $0$ \\ \hline
		$a_4$ & $0$ & $0$ & $0$ & $0$ & $0$ \\ \hline
		$b_1$ & $0$ & $0$ & $0$ & $0$ & $0$ \\ \hline
		$b_2$ & $0$ & $0$ & $0$ & $0$ & $0$ \\ \hline
		$b_3$ & $1$ & $1$ & $1$ & $1$ & $0$ \\ \hline		
		\end{tabular}
		\caption{Entity $a_2$ is hardened}\label{tbl:example1modcascade_a}
	\end{subtable}
	\quad
	\begin{subtable}[t]{0.45\columnwidth}
		\begin{tabular}{|c|c|c|c|c|c|}  \hline
		\multicolumn{1}{|c|}{Entities} & \multicolumn{5}{c|}{Time Steps ($t$)}\\
		\cline{2-6} & $0$ & $1$ & $2$ & $3$ & $4$ \\\hline \hline
		$a_1$ & $0$ & $0$ & $1$ & $1$ & $1$ \\ \hline
		$a_2$ & $1$ & $1$ & $1$ & $1$ & $1$ \\ \hline
		$a_3$ & $0$ & $0$ & $0$ & $0$ & $0$ \\ \hline
		$a_4$ & $0$ & $0$ & $0$ & $0$ & $0$ \\ \hline
		$b_1$ & $0$ & $0$ & $0$ & $1$ & $1$ \\ \hline
		$b_2$ & $0$ & $1$ & $1$ & $1$ & $1$ \\ \hline
		$b_3$ & $*$ & $*$ & $*$ & $*$ & $*$ \\ \hline		
		\end{tabular}
		\caption{Entity $b_3$ is hardened}\label{tbl:example1modcascade_b}
	\end{subtable}
	\vspace{-10pt}
	\caption{Failure cascade propagation with entity hardening. Entities $\{a_2, b_3\}$ are attacked at time step $t=0$. A value of $1$ denotes entity failure, $0$ otherwise. $*$ denotes a hardened entity.}
\protect\label{tbl:example1modcascade}
\end{center}
\vspace{-25pt}
\end{table}

We now proceed to formulate the entity hardening problem formally. Given an interdependent network system $\mathcal{I}(A,B,\mathcal{F}(A,B))$, and the set of $\cal{K}$ most vulnerable entities of the system $A' \cup B'$, where $A' \subseteq A$ and $B' \subseteq B$:

\vspace{0.05in}
\noindent
\textbf{{\em The Entity Hardening (ENH) problem}}\\
INSTANCE: Given:\\
(i) An interdependent network system $\mathcal{I}(A,B,\mathcal{F}(A,B))$, where the sets $A$ and $B$ represent the entities of the two networks, and $\mathcal{F}(A,B)$ is the set of IDRs.\\
(ii) The set of $\cal{K}$ most vulnerable entities of the system $A' \cup B'$, where $A' \subseteq A$ and $B' \subseteq B$ \\
(iii) Two positive integers $k, k<\cal{K}$ and $E_F$. \\ \\
QUESTION:Is there a set of entities $\mathcal{H} = A'' \cup B'', A'' \subseteq A, B'' \subseteq B, |\mathcal{H}| \leq k$, such that hardening $\mathcal{H}$ entities results in no more than $E_F$ entities to fail after entities $A' \cup B'$ fail at time step $t=0$. 

We note some of the assumptions for the ENH problem: First, we assume that once an entity is hardened, it is always operational and does not fail at any time step of the observation, even when the entity is part of the $\cal{K}$ most vulnerable entities. Second, we assume that $k<\cal{K}$, as otherwise the selection of $\cal{K}$ entities for hardening ensures that no entities fail at all. Finally, as noted earlier, we assume that the set of $\cal{K}$ most vulnerable entities in the network is \emph{unique}. We now proceed to analyze the computational complexity of the ENH problem.

%\vspace{0.05in}
%\noindent
%\textbf{{\em The Entity Hardening problem}}\\
%INSTANCE: Given:\\
%(i) Two set of entities $A$ and $B$.\\
%(ii) A set of IDRs between these sets of entities. \\
%(iii) Subsets $A' \subseteq A$ and $B' \subseteq B$. \\
%(iv) A positive integer $K_F$ with $|A'|+|B'| \le K_F$, such that for a given budget $K_F$ failing entities in $A' \cup B'$ causes maximum number of entities to fail in $A \cup B$. \\
%(v) Two positive integers $K_D$ and $E_F$. \\ \\
%QUESTION:Is there a subset $A^{''} \subseteq A$ and a subset $B^{''} \subseteq B$ with $|A^{''}| + |B^{''}| \le K_D$, such that hardening at most $K_D$ entities would result in no more than $E_F$ entities to fail after failure of entities $A' \cup B'$. 

%To solve the problem the following assumptions are taken into account ---
%\begin{itemize}
%\item An entity which is hardened wouldn't fail in the initial failure trigger as well as due to any induced failure. Essentially the dependency of the hardened entity can be considered to be removed. 
%\item It is assumed that $K_D < K_F$, as according to the previous assumption if $K_D \ge K_F$ then hardening the corresponding $K_F$ entities alone would ensure that no entities fail after initial failure trigger. 
%\end{itemize}

\section{Computational Complexity Analysis}
\label{CompAna}
For an interdependent network $\mathcal{I}(A,B,\mathcal{F}(A,B))$ the IDRs can be represented in four different forms. We analyze the computational complexity of the ENH problem for each of these cases separately.

\subsection{Case I: Problem Instance with One Minterm of Size One}
The IDRs of Case I have a single minterm of size $1$. This can be represented as $x_{i} \leftarrow y_{j}$, where $x_{i}$ and $y_{j}$ are entities of network $A(B)$  and $B(A)$ respectively. We show that the ENH problem for Case I can be solved optimally in polynomial time. %For the following algorithm we define a seed entity as 

%Consider an interdependent network $\mathcal{I}(A,B,\mathcal{F}(A,B))$. 

%We define kill set $C_{x_i}$ of an entity $x_i \in A \cup B$ as the set of entities that dies after a steady state is reached when only $x_i$ is killed initially. The entity $x_i$ is termed as the seed entity for kill set $C_{x_i}$. With the given definitions, Algorithm 1 solves the problem optimally in polynomial time with proof of optimality given in Theorem \ref{CaseI}.

\begin{algorithm}
\small
	\KwData{An interdependent network $\mathcal{I}(A,B,\mathcal{F}(A,B))$, set of $\cal{K}$ most vulnerable entities $A' \cup B', A' \subseteq A, B' \subseteq B$, hardening budget $k$ and a set $\mathcal{H}=\emptyset$. 
		}
	\KwResult{Set of hardened entities $\mathcal{H}$.
		}
	\Begin{			
		  For each entity $x_i \in (A' \cup B')$ compute the set of kill sets $\mathcal{C}=\{C_{x_1}, C_{x_2}, ..., C_{x_{ \mathcal{K}}}\}$, where $C_{x_i} = KillSet(x_i)$ \;
			%Compute the set of kill sets $\mathcal{C}=\{C_{x_1}, C_{x_2}, ..., C_{x_{K_{F}}}\}$ with a one-one mapping between seed entities $X=\{x_1,...,x_{K_F}\}$ and all entities $e_i \in A' \cup B'$ \;
			Create a copy $\mathcal{D}=\{D_{x_1}, D_{x_2}, ..., D_{x_{\mathcal{K}}}\}$ of set $\mathcal{C}$ \;		
			%$\mathcal{D} \leftarrow \mathcal{C}$ \;
			\For  {(i=1; $i \le \mathcal{K}$; i++)}{
     				\For  {(j=1, $j \ne i$; $j \le \mathcal{K}$; j++)}{
					\If{$C_{x_j} \subset C_{x_i}$}{
						% Update $D_{x_i}$ with $D_{x_i} - D_{x_j}$ \;
						$D_{x_i} \leftarrow D_{x_i} \setminus D_{x_j}$ \;
					}
				}
			}
			Choose the top $k$ sets from $\mathcal{D}$ with highest cardinality \;			
			%Choose $K_D$ highest cardinality sets from set $\mathcal{D}$\;
			For each of the $D_{x_i} \subseteq \mathcal{D}$ sets chosen in Step 8, $\mathcal{H} \leftarrow \mathcal{H} \cup x_i$ \;
			%Map the seed entity of the sets selected to entities $e_i$ comprising of the set of hardened entities in $A'' \cup B''$\;
			\Return{$\mathcal{H}$}
	}		
\caption{Entity Hardening Algorithm for systems with Case I type interdependencies}
\label{algCaseI}
\end{algorithm}

%We claim that Algorithm \ref{algCaseI} solves the Entity Hardening problem for Case I optimally in polynomial time. The proof of this claim is left out due to lack of space. The interested reader is directed to \cite{archive}.

\begin{theorem}{} \label{CaseI} Algorithm \ref{algCaseI} solves the Entity Hardening problem for Case I optimally in polynomial time. 
\end{theorem}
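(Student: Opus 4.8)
The plan is to prove two things: that Algorithm~\ref{algCaseI} terminates in polynomial time, and that the set $\mathcal{H}$ it returns minimizes the number of entities that fail after the attack on $A'\cup B'$ (equivalently, decides the $E_F$ threshold correctly). The running-time claim is routine: each $KillSet(x_i)$ is obtained by a single traversal of the failure-propagation digraph, the nested double loop performs $O(\mathcal{K}^2)$ set subtractions, and selecting the $k$ largest sets is a sort; every step is polynomial in $|A|+|B|+|\mathcal{F}(A,B)|$. The substance is optimality, which I would derive from the restrictive structure of Case~I.

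First I would record the structural observation that drives everything. Since every IDR has a single minterm of size one, each entity $x_i$ depends on at most one entity $y_j$, so in the failure-propagation digraph (with an edge $y_j\to x_i$ for each relation $x_i\leftarrow y_j$) every vertex has in-degree at most one. Consequently the set of ancestors of any entity $e$ is a single chain, and therefore the kill sets $C_{x_i}=KillSet(x_i)$ are pairwise either disjoint or nested: if $C_{x_i}$ and $C_{x_j}$ share an entity $e$, then $x_i$ and $x_j$ both lie on $e$'s unique ancestor chain, so one is an ancestor of the other and one kill set contains the other. This is precisely the property the subtraction step of the algorithm exploits.

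Next I would show that the sets $D_{x_i}$ produced by the algorithm are pairwise disjoint with $\bigcup_i D_{x_i}=\bigcup_i C_{x_i}$; concretely, $D_{x_i}$ equals the set $M_{x_i}$ of entities whose \emph{lowest} vulnerable ancestor is $x_i$. Because the kill sets are nested or disjoint, removing the smaller kill sets from $C_{x_i}$ deletes exactly the entities possessing a vulnerable ancestor strictly below $x_i$, leaving $M_{x_i}$, and these marginal sets partition $\bigcup_i C_{x_i}$. The payoff is an additivity lemma: since only vulnerable entities are hardened, an entity that would otherwise fail survives if and only if its lowest vulnerable ancestor is hardened (hardening that ancestor both spares it and blocks the cascade descending from above, whereas leaving it unhardened lets the cascade reach $e$). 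Hence hardening a set $H$ of vulnerable entities saves exactly $\bigcup_{x_i\in H}D_{x_i}$ entities, a disjoint union of size $\sum_{x_i\in H}|D_{x_i}|$, so greedily taking the $k$ largest $D$-sets maximizes the number saved and minimizes the number failed.

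The step I expect to be the main obstacle is justifying that restricting the budget to vulnerable entities loses nothing, since the problem permits hardening any entity of $KillSet(A'\cup B')$. I would handle this with an exchange argument: for a hardened \emph{non}-vulnerable entity $e$, every entity it rescues is a descendant of $e$ reached from above with no vulnerable entity lying between $e$ and it, hence an element of $M_v$ where $v$ is $e$'s lowest vulnerable ancestor; replacing $e$ by $v$ therefore rescues a superset and never enlarges the budget. Iterating converts any optimal solution into one that hardens only vulnerable entities without decreasing the number saved, after which the additivity and greedy arguments close the proof. The care needed is twofold: confirming that the $D_{x_i}$ actually computed (which subtracts the running $D_{x_j}$ rather than the original $C_{x_j}$) still coincides with $M_{x_i}$ under the nesting structure, and checking the degenerate interaction with the uniqueness assumption on the $\mathcal{K}$ most vulnerable entities.
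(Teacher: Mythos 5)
Your proposal is correct and follows essentially the same route as the paper: establish the nested-or-disjoint structure of kill sets (which the paper cites from \cite{sen2014identification} and you rederive from the in-degree-one chain structure of Case I), show the subtracted sets $D_{x_i}$ are pairwise disjoint and capture exactly the entities saved by hardening $x_i$, argue via exchange that hardening non-vulnerable entities is dominated, and conclude that taking the $k$ largest $D$-sets is optimal. Your version is in fact more careful than the paper's, making explicit the additivity lemma and the subtraction-of-running-$D_{x_j}$ subtlety that the paper's proof glosses over.
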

\vspace{-0.091in}
\begin{proof}
It is shown in \cite{sen2014identification} that the kill set for all entities in the interdependent network can be computed in $\mathcal{O}(n^3)$ where $n=|A|+|B|$, thus computing the kill sets for $\mathcal{K}$ entities takes $\mathcal{O}(\mathcal{K}n^2)$. Step 4-7 of the algorithm runs in $\mathcal{O}(\mathcal{K}^2)$. Choosing the $k$ highest cardinality sets can be found using any standard sorting algorithm in $\mathcal{O}(\mathcal{K}log(\mathcal{K}))$. Hence Algorithm \ref{algCaseI} runs in $\mathcal{O}(\mathcal{K}n^2)$. 

%It is shown in \cite{sen2014identification} that the kill set for all entities in the interdependent network can be computed in $\mathcal{O}(n^3)$ where $n=|A|+|B|$, thus computing the kill sets for $\mathcal{K}$ entities takes $\mathcal{O}(\mathcal{K}n^2)$. The mapping of entities can be done in $\mathcal{O}(K_F)$. Step 4-7 of the algorithm runs in $\mathcal{O}(K_F^2)$. The $K_D$ highest cardinality sets can be found using any standard sorting algorithm in $\mathcal{O}(K_F log(K_F))$ and the reverse mapping to find the entities hardened in $\mathcal{O}(K_F)$. Hence Algorithm \ref{algCaseI} runs in $\mathcal{O}(K_F n^2)$. 

%\vspace{-0.091in}
For two kill sets $C_{x_i}$ and $C_{x_j}$ it can be shown that either $C_{x_i} \cap C_{x_j} = \emptyset$ or $C_{x_i} \cap C_{x_j} = C_{x_i}$ or $C_{x_i} \cap C_{x_j} = C_{x_j}$ \cite{sen2014identification}. So with two entities $\{x_i , x_j\} \in A' \cup B'$ and $C_{x_i} \cap C_{x_j} = C_{x_j}$ i.e, $C_{x_j} \subset C_{x_i}$, if $x_i$ is hardened it prevents the failure of  $C_{x_i} - C_{x_j}$ entities (provided that none of the entities in $C_{x_i} - C_{x_j} - \{x_i\}$ are in  $A' \cup B'$). With this assertion, for an entity $x_i \in A' \cap B'$, steps 4-7 of Algorithm \ref{algCaseI} finds the actual entities for which failure is prevented by hardening $x_i$. The set $\mathcal{D}=\{D_{x_1}, D_{x_2}, ..., D_{x_{\mathcal{K}}}\}$ comprises of these set of entities for each hardened entity $x_i$. 

%\vspace{-0.1in}

To prove that Algorithm \ref{algCaseI} finds the optimal solution we make the following two assertions: First, consider any two sets $D_{x_i}$ and $D_{x_j}$. It is implied from step 6 of Algorithm \ref{algCaseI} that  $D_{x_i} \cap D_{x_j} = \emptyset$. Second, consider an entity $x_p \notin A' \cup B'$ is hardened. If $x_p$ fails when entities in $A' \cup B'$ fails initially then it would belong to some set $D_{x_i}$. Thus hardening $x_p$ results in preventing the failure of entities that is a proper subset of $D_{x_i}$. Hence the entities to be hardened must belong to $A' \cup B'$ only. Owing to the two assertions it directly follows that with a given budget $k$, hardening $k$ highest cardinality sets from the set $\mathcal{D}$ ensures prevention of failure for the maximum number of entities. 
\end{proof}

\subsection{Case II: Problem Instance with One Minterm of Arbitrary Size}
The IDRs of Case II have a single minterm of arbitrary size. This can be represented as $x_{i} \leftarrow \prod^{p}_{j=1} y_{j}$, where $x_{i}$ and $y_{j}$ are entities of network $A(B)$  and $B(A)$ respectively and the size of the minterm is $p$. The Entity Hardening problem with respect to Case II is NP-complete and is proved in Theorem \ref{CaseII}.  An inapproximability proof for this case of the problem is given in Theorem \ref{inapxC2}

\begin{theorem}{} \label{CaseII}The Entity Hardening problem for Case II is NP Complete 
\end{theorem}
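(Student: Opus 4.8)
The plan is to establish membership in NP first and then prove NP-hardness by a reduction from the \textsc{Clique} problem. Membership in NP is immediate: given a candidate hardened set $\mathcal{H}$ with $|\mathcal{H}| \le k$, one simulates the failure cascade in polynomial time---using the $\mathcal{O}(n^3)$ kill-set computation of \cite{sen2014identification} with the entities of $\mathcal{H}$ forced to stay operational at every time step---and checks that at most $E_F$ entities fail. So a guessed hardening set is a polynomially verifiable certificate.

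For hardness, I would first record the structural observation that drives everything: in Case~II a single-minterm IDR $x \leftarrow \prod_j y_j$ forces $x$ to fail as soon as \emph{any} one $y_j$ fails, so the cascade is precisely directed reachability along the edges $y_j \to x$, and hardening an entity simply deletes it from this reachability process. The crucial consequence of this AND-only logic is that an entity can be \emph{saved} only when \emph{all} of the entities feeding it are saved. Given a \textsc{Clique} instance $(H=(V_H,E_H),k)$, I would build $\mathcal{I}(A,B,\mathcal{F}(A,B))$ with one ``vertex entity'' $p_u \in B$ (a pure source, with no IDR) for each $u \in V_H$, one ``edge entity'' $f_e \in A$ for each $e=\{u,v\}\in E_H$, and the size-two IDR $f_e \leftarrow p_u p_v$. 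I set $A' \cup B' = \{p_u : u \in V_H\}$ so that $\mathcal{K}=|V_H|$, keep the budget $k$, and put $E_F = (|V_H|+|E_H|) - (k + \binom{k}{2})$; since \textsc{Clique} stays NP-complete for $k<|V_H|$ (append an isolated vertex to $H$), this respects the requirement $k<\mathcal{K}$.

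Two facts would then yield the equivalence. First, every $p_u$ is a source and every $f_e$ a sink, so $\{p_u\}$ kills all $|V_H|+|E_H|$ entities and is the \emph{unique} $\mathcal{K}$-subset that does so---any substitution discards a source $p_u$ that can never again be killed---which certifies that the produced instance is well-formed, i.e. $A'\cup B'$ is genuinely the unique set of $\mathcal{K}$ most vulnerable entities assumed by the model. Second, because the $p_u$ fail unless hardened, an edge entity $f_e$ is saved exactly when it is itself hardened or both of its endpoints are hardened. Writing $P$ and $Q$ for the hardened vertex and edge entities respectively, the number of saved entities is at most $|P| + |Q| + e(P) = k + e(P) \le k + \binom{k}{2}$, where $e(P)$ is the number of edges of $H$ induced by $P$; equality forces $Q=\emptyset$, $|P|=k$, and $P$ inducing a clique. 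Hence the failures can be driven down to $E_F$ if and only if $H$ has a $k$-clique.

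The routine parts---the cascade simulation for NP-membership and the arithmetic fixing $E_F$---are straightforward. The main obstacle is the \emph{backward} direction: I must rule out that some clever allocation of the budget to sink (edge) entities, or to fewer than $k$ vertex entities, saves more than $k+\binom{k}{2}$ entities and thereby hits the threshold without a clique. The inequality $\text{saved}\le k+e(P)$ combined with $e(P)\le\binom{|P|}{2}\le\binom{k}{2}$ is exactly what closes this gap and must be checked with care, as must the uniqueness of the most-vulnerable set, since otherwise the reduction would not produce a legal ENH instance.
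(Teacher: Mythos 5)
Your proof is correct, and it is essentially the paper's reduction specialized to graphs: the paper proves Case~II hard by reducing from the Densest $p$-Subhypergraph problem, building exactly your gadget (one source entity per vertex, one sink entity per hyperedge with a conjunctive IDR over its vertices, the attack set being all vertex entities, and $E_F=|V|+|E|-p-M$), whereas you reduce from Clique, which is the special case of Densest $p$-Subhypergraph on $2$-uniform hypergraphs with $M=\binom{k}{2}$. What your route buys is elementarity and rigor: your counting bound $\text{saved}\le|P|+|Q|+e(P)\le k+\binom{k}{2}$ with the equality analysis forcing $Q=\emptyset$ and $|P|=k$ makes the backward direction airtight, where the paper only asserts informally that hardened entities ``will always be from set $B'$'' (this really needs the exchange argument your inequality supplies), and you verify uniqueness of the most-vulnerable set just as the paper does. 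What it loses is the follow-on result: the paper's choice of source problem is deliberate, since the same approximation-preserving reduction immediately yields its Theorem~3, the $2^{\log(n)^{\lambda}}$-factor inapproximability of Case~II, which a Clique reduction does not give (your objective counts the $k$ hardened vertices among the saved entities, so even clique-inapproximability would not transfer cleanly). Two trivial loose ends in your write-up, neither a real gap: for $k=1$ equality can hold with $Q\neq\emptyset$ (hardening a single edge entity also saves exactly one entity), but the equivalence survives because a $1$-clique always exists; and you should pre-check $|V_H|\ge k$ and $|E_H|\ge\binom{k}{2}$ (rejecting otherwise) so that $E_F$ is a legal nonnegative parameter.
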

%\vspace{-0.091in}
\begin{proof}
The Entity Hardening problem for case II is proved to be NP complete by giving a reduction from the Densest $p$-Subhypergraph problem \cite{hajiaghayi2006minimum}, a known NP-complete problem. An instance of the  Densest $p$-Subhypergraph problem includes a hypergraph $G=(V,E)$, a parameter $p$ and a parameter $M$. The problem asks the question whether there exists a set of vertices $|V'| \subseteq V$ and $|V'| \le p$ such that the subgraph induced with this set of vertices has at least $M$ hyperedges. From an instance of the  Densest $p$-Subhypergraph problem we create an instance of the ENH problem in the following way. For each vertex $v_i$ and each hyperedge $e_j$ an entity $b_i$ and $a_j$ are added to the set $B$ and $A$ respectively. For each hyperedge $e_j$ with $e_j= \{v_m, v_n, v_q\}$ (say) an IDR of form $a_j \leftarrow b_m b_n b_q$ is created. It is assumed that the value of $\mathcal{K}$ is set of $|V|$. The values of $k$ and $E_F$ are set to $p$ and $|V|+|E|-p-M$ (where $|A|=|V|$ and $|B|=|E|$) respectively.

%\vspace{-0.091in}
In the constructed instance only entities of set $A$ are dependent on entities of set $B$. Additionally the dependency for an entity $a_i$ consists of conjunction of entities in set $B$. Hence for an entity $a_i \in A$ to fail, either it itself has to fail initially or all entities to which $a_i$ is dependent on has to fail. It is to be noted that the entities in set $B$ has no induced failure i.e., there is no cascade. Following from this assertion, with $\mathcal{K}=p$, the solution $A'=\emptyset$ and $B'=B$ would fail all entities in set $A \cup B$. Moreover this is the single unique solution to the problem instance. This is because by including one entity $a_i$ in the initial failure set would result in not failing at least one entity $b_j$ for a given budget $\mathcal{K}=p$. Hence it won't fail the entire set of entities in $A \cup B$. 

%\vspace{-0.091in}
If an entity in set $A$ is hardened then it would have no effect in failure prevention of any other entities. Whereas hardening an entity $b_m \in B$ might result in failure prevention of an entity $a_i \in A$ with IDR $a_j \leftarrow b_m b_n b_q$ provided that entities $b_n, b_q$ are also defended. With $k = p$ (and $\mathcal{K} \le |V|=|B|$) it can be ensured that entities to be defended are from set $B'$. 

%\vspace{-0.091in}
To prove the theorem consider that there is a solution to the Densest $p$-Subhypergraph problem. Then there exist $p$ vertices which induces a subgraph which has at least $M$ hyperedges. Hardening the entities $b_i \in B'$ for each vertex $v_i$ in the solution of the Densest $p$-Subhypergraph problem would then ensure that at least $M$ entities in set $A$ are protected from failure. This is because the entities in set $A$ for which the failure is prevented corresponds to the hyperedges in the induced subgraph. Thus the number of entities that fail after hardening $p$ entities is at most $|V|+|E|-p-M$, solving the ENH problem. Now consider that there is a solution to the ENH problem. As previously stated, the entities to be hardened will always be from set $B'$. So defending $p$ entities from set $B'$ would result in failure prevention of at least $M$ entities in set $A$ such that $E_F \le |V|+|E|-p-M$. Hence, the vertex induced subgraph would have at least $M$ hyperedges when vertices corresponding to the entities hardened are included in the solution of the Densest $p$-Subhypergraph problem, thus solving it. 
\end{proof}
 
\begin{theorem} {} \label{inapxC2} For an interdependent network $\mathcal{I}(A,B,\mathcal{F}(A,B))$ with $n=|A \cup B|$ and $\mathcal{F}(A,B)$ having IDRs of form Case II, it is hard to approximate the ENH problem within a factor of $\frac{1}{2^{log(n)^{\lambda}}}$ for some $\lambda >0$. 
\end{theorem}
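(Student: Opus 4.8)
The plan is to reuse the reduction from the Densest $p$-Subhypergraph problem constructed in the proof of Theorem \ref{CaseII}, but to argue that it is \emph{gap-preserving}: the number of entities saved by a hardening set in the ENH instance equals, exactly, the number of hyperedges induced by the corresponding vertex set in the Densest $p$-Subhypergraph instance. Once this exact correspondence of objective values is in place, the known inapproximability of the Densest $p$-Subhypergraph problem \cite{hajiaghayi2006minimum} transfers directly to the maximization version of ENH.

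First I would fix the optimization objective: the value of a hardening set $\mathcal{H}$, $|\mathcal{H}| \le k$, is the number of entities whose failure is prevented when $A' \cup B'$ fail at $t=0$ (equivalently, $n - E_F$). I would then invoke the structural facts already established in Theorem \ref{CaseII}: hardening an entity of $A$ saves no other entity; with $k=p$ and $\mathcal{K} \le |B|$ every hardening set may be taken to lie entirely in $B$; and, since entities of $B$ experience no induced failures, an entity $a_j$ with IDR $a_j \leftarrow b_m b_n b_q$ is saved precisely when all of $b_m, b_n, b_q$ are hardened. Consequently, hardening a set $B'' \subseteq B$ with $|B''| \le p$ saves exactly those $a_j$ whose defining minterm is contained in $B''$, i.e. exactly the hyperedges of the subhypergraph induced by the vertex set $\{\, v_i : b_i \in B'' \,\}$. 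Thus the optimal ENH value equals the optimal Densest $p$-Subhypergraph value, and any feasible solution of one yields a feasible solution of the other with identical value.

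Because this map carries objective values across unchanged, a polynomial-time algorithm that approximates the ENH maximization objective within a factor $\alpha$ would approximate Densest $p$-Subhypergraph within the same $\alpha$. Since the latter is hard to approximate within $\frac{1}{2^{\log(N)^{\lambda}}}$ for some $\lambda > 0$, and the reduction is polynomial with $n = |A| + |B|$ polynomially related to the size $N$ of the hypergraph instance, the same factor (after absorbing the polynomial change of size into a possibly smaller $\lambda$) holds for ENH, establishing the claim.

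I expect the main obstacle to be the bookkeeping of the objective rather than any new combinatorial idea. The ENH problem is stated as a decision problem that \emph{minimizes} the failure count $E_F$, whereas the hardness factor $\frac{1}{2^{\log(n)^{\lambda}}}$ is a \emph{maximization} ratio on the saved entities; I must state explicitly that the quantity being approximated is the number of saved entities (the hyperedge count) and confirm that the additive baseline $|V| + |E| - p$ appearing in $E_F$ does not contaminate the multiplicative gap. I would also need to verify that no spurious savings arise from cascades, which is guaranteed because set $B$ has no induced failures, and to check that the exponent $\lambda$ survives the polynomial relabeling of the instance size from $N$ to $n$.
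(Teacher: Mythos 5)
Your proposal is correct and takes essentially the same route as the paper, whose proof simply observes that the Theorem~\ref{CaseII} reduction exhibits Densest $p$-Subhypergraph as a special case of the Case~II ENH problem and imports the $\frac{1}{2^{\log(n)^{\lambda}}}$ hardness factor from \cite{hajiaghayi2006minimum}. You in fact supply more detail than the paper does---the exact correspondence of objective values, the explicit choice of the maximization objective (saved entities), and the absorption of the polynomial instance-size change into a smaller $\lambda$---all points the paper leaves implicit.
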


\begin{proof}
From Theorem \ref{CaseII}, Densest $p$-Subhypergraph problem has been shown to be a special case of the ENH problem with IDRs of form Case II.  Densest $p$-Subhypergraph problem is proved to be inapproximable  within a factor of $\frac{1}{2^{log(n)^{\lambda}}}$ ($\lambda >0$) in \cite{hajiaghayi2006minimum}. Hence the theorem follows.
\end{proof}

\subsection{Case III: Problem Instance with an Arbitrary Number of Minterm of Size One}
The IDRs of Case III have arbitrary number of minterm of size $1$. This can be represented as $x_{i} \leftarrow \sum^{p}_{q=1} y_{q}$, where $x_{i}$ and $y_{q}$ are entities of network $A(B)$  and $B(A)$ respectively and the number of minterms are $p$. The ENH problem with respect to Case III is NP-complete and is proved in Theorem \ref{CaseIII}.
\begin{theorem}{} \label{CaseIII}The ENH problem for Case III is NP Complete 
\end{theorem}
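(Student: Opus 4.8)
The plan is to show membership in NP and then give a polynomial-time reduction from the (decision version of the) \emph{Maximum Coverage} problem, which is well known to be NP-complete. Membership in NP is immediate: given a candidate hardening set $\mathcal{H}$ with $|\mathcal{H}| \le k$, we simulate the time-stepped cascade starting from the failure of $A' \cup B'$ (the cascade terminates in polynomially many steps, and kill sets are computable in $\mathcal{O}(n^3)$ as noted earlier), and verify that at most $E_F$ entities fail. Hence a certificate can be checked in polynomial time.

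For hardness I would exploit the disjunctive (OR) structure of Case~III: an entity $x_i \leftarrow \sum_q y_q$ fails only when \emph{all} of its supports $y_q$ fail, so an entity is saved as soon as a single one of its supports is saved. This is exactly the covering structure of Maximum Coverage. Given an instance with universe $U$, sets $S_1,\dots,S_m$, budget $k$ and target $M$, I would build an ENH instance as follows: introduce a $B$-entity $b_j$ for each set $S_j$ and an $A$-entity $a_i$ for each element $u_i \in U$, and for each $a_i$ create the IDR $a_i \leftarrow \sum_{j:\, u_i \in S_j} b_j$. No $B$-entity is given a dependency, so there is no cascade among $B$. I would set $\mathcal{K} = m = |B|$, keep the coverage budget $k$ (assuming WLOG $k < m$, so $k < \mathcal{K}$), and set $E_F = (|A|+|B|) - k - M$. (Assume WLOG every element lies in some set, so every $a_i$ has a nonempty support.)

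Next I would establish the two structural facts that drive the equivalence, mirroring the argument of Theorem~\ref{CaseII}. First, the unique set of $\mathcal{K}$ most vulnerable entities is $B$ itself: failing all of $B$ forces every $a_i$ to fail (all its supports are gone), killing all $|A|+|B|$ entities, whereas any size-$m$ initial set other than $B$ leaves some $b_{j_0}$ operational (a $B$-entity never fails except by direct attack), which in turn keeps $b_{j_0}$ and every $a_i$ depending on it alive, so strictly fewer entities fail. Second, there is an optimal hardening set contained entirely in $B$: hardening an $A$-entity saves only that sink entity, while hardening a $B$-entity $b_j$ saves $b_j$ together with every $a_i$ whose support contains $b_j$; an exchange argument therefore lets us assume all $k$ hardened entities lie in $B$.

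Finally I would read off the correspondence. If the hardened set is $H \subseteq B$ with $|H| = k$, the saved entities are exactly the $k$ entities of $H$ plus the union $\bigcup_{b_j \in H}\{a_i : u_i \in S_j\}$, whose size is the number of universe elements covered by the chosen sets. Hence the number of failed entities equals $(|A|+|B|) - k - (\text{elements covered})$, and this is at most $E_F = (|A|+|B|) - k - M$ precisely when the chosen $k$ sets cover at least $M$ elements. Thus the ENH instance is a yes-instance iff the Maximum Coverage instance is, completing the reduction. The main obstacle I anticipate is not the reduction itself but the two structural claims---in particular verifying the \emph{uniqueness} of the $\mathcal{K}$ most vulnerable set required by the model, and making the exchange argument that confines the optimal hardening to $B$ fully rigorous; everything else is bookkeeping.
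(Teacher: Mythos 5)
Your proof is correct, and it uses exactly the gadget the paper uses: elements of the universe become $A$-entities, sets become $B$-entities, each element's IDR is the disjunction of the $B$-entities for the sets containing it, and $\mathcal{K}=m$ so that the unique most vulnerable set is $B$ itself; your two structural lemmas (uniqueness of the most vulnerable set, and the exchange argument confining the hardened entities to $B$) play the same roles in the paper's proof. The only genuine difference is the source problem: the paper reduces from Set Cover, setting $k=M$ and $E_F=m-M$, so that a yes-instance forces \emph{every} $A$-entity to be saved and the hardened $B$-entities must form a cover, whereas you reduce from the decision version of Maximum Coverage, keeping an arbitrary budget $k$ and setting $E_F=(|A|+|B|)-k-M$, so the equivalence runs through partial coverage counts rather than the full-coverage threshold. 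Both are sound. The paper's choice buys a slightly cleaner accounting (the failed set is exactly the unhardened $B$-entities) and rests on the most classical NP-complete source; yours is arguably the more natural fit, since ENH under Case III \emph{is} a coverage-maximization problem---the paper itself converts Case III to Maximum Coverage for its $1-\frac{1}{e}$ approximation (Theorem \ref{CaseIIIappx2})---though it obliges you to cite NP-completeness of the Maximum Coverage decision problem, which is itself established via Set Cover, so the two reductions are close cousins. Two small polish points: your parenthetical that a surviving $b_{j_0}$ ``keeps every $a_i$ depending on it alive'' fails when such an $a_i$ is itself in the attack set, but the conclusion stands because $b_{j_0}$ surviving already yields strictly fewer than $|A|+|B|$ failures; and your explicit NP-membership check by cascade simulation is something the paper omits entirely, so it is an addition rather than a gap.
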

%\vspace{-0.091in}

\begin{proof}
The ENH problem for case III is proved to be NP complete by giving a reduction from the Set Cover Problem, a well known NP-complete problem. An instance of the Set Cover problem includes a set $S=\{x_1,x_2,...,x_n\}$, a set $\mathcal{S}=\{S_1,S_2,...,S_m\}$ where $S_i \subseteq S$ and a positive integer $M$. The problem asks the question whether there exists at most $M$ subsets from set $\mathcal{S}$ whose union would result in the set $S$. From an instance of the set cover problem we create an instance of the ENH problem in the following way. For each element $x_i$ in set $S$ we add an entity $a_i$ in set $A$. For each subset $S_i$ in set $\mathcal{S}$ we add an entity $b_i$ in set $B$. For all subsets in $\mathcal{S}$, say $S_p, S_m,S_n$, which has the element $x_i$ there is an IDR of form $a_i \leftarrow b_m+b_n+b_l$.  The values of positive integers $k$ and $E_F$ are set to $M$ and $m-M$ respectively. It is assumed that the value of $\mathcal{K}=m$.

%\vspace{-0.091in}
With similar reasoning as that of Case II it can be shown that for $\mathcal{K}=m$ the maximum number of node failures (i.e. failure of all entities in $A \cup B$) would occur if $A'=\emptyset$ and $B'=B$. This is also the single unique solution to the problem instance. 

%\vspace{-0.091in}
The constructed instance also ensures that the entities to be hardened are from set $B'$ ($A'$ not considered as it is equal to $\emptyset$). This is because protecting an entity $a_i \in A$ would only result in prevention of its own failure whereas protecting an entity $b_j \in B$ would result in failure prevention of its own and all other entities in set $A$ for which it appears in its IDR. 

%\vspace{-0.091in}
To begin with the proof, consider that there is a solution to the Set Cover problem. Then there exist $M$ subsets (or elements in set $\mathcal{S}$) whose union results in the set $S$. Hardening the entities in set $B$ corresponding to the subsets selected would ensure that all entities in set $A$ are prevented from failure. This is because for the dependency of each entity $a_i \in A$ there exist at least one entity (in set $B$) that is hardened. Hence the number of entities that fails after hardening is $m-M$ which is equal to $E_F$, thus solving the ENH problem. Now, consider that there is a solution to the ENH problem. As discussed above the entities to be hardened should be from set $B'$. To achieve $E_F=m-M$ with $k=M$, no entities in the set $A$ must fail. Hence for each entity $a_i \in A$ at least one entity in set $B$ that appears in its IDR has to be hardened. Thus, it directly follows that the union of subsets in set $\mathcal{S}$ corresponding to the entities hardened is equal to the set $S$, solving the Set Cover Problem. 
\end{proof}

\subsubsection{Approximation Scheme for Case 3}
In this subsection we provide an approximation algorithm for Case 3 of the problem. For an interdependent network $\mathcal{I}(A,B,\mathcal{F}(A,B))$ with the initial failed set of entities as $A' \cup B'$ we define \emph{Protection Set} of each entity as follows. 
%\vspace{-0.12in}
\begin{definition}
\textit{For an entity $x_i \in A \cup B$ the Protection Set is defined as the entities that would be prevented from failure by hardening the entity $x_i$ when all entities in $A' \cup B'$ fails initially. This is represented as $P(x_i|A' \cup B')$.}
\end{definition}
%\vspace{-0.12in}

The Protection Set of each entity can be computed in $\mathcal{O}((n+m)^2)$ where $n$ and $m$ are the number of entities and number of minterms respectively in an interdependent network $\mathcal{I}(A,B,\mathcal{F}(A,B))$ .
\begin{theorem}
\label{CaseIIIappx1}
For two entities $x_i,x_j \in A \cup B$, $P(x_i|A' \cup B') \cup P(x_j|A' \cup B') = P(x_i,x_j|A' \cup B')$ when IDRs are of form Case III.
\end{theorem}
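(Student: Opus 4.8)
The plan is to recast the set identity at the level of failed (equivalently, surviving) entities and then exploit the fact that in Case III survival is governed by ordinary directed reachability. Write $I = A' \cup B'$, let $C_H$ denote the set of entities that eventually fail when the entities of $H$ are hardened and the entities of $I$ fail at $t=0$, and let $C_\emptyset$ be the baseline cascade with nothing hardened. By definition $P(x_i \mid I) = C_\emptyset \setminus C_{\{x_i\}}$ and $P(x_i,x_j \mid I) = C_\emptyset \setminus C_{\{x_i,x_j\}}$. Since hardening only ever prevents failures (a straightforward induction on cascade time steps gives $C_{\{x_i,x_j\}} \subseteq C_{\{x_i\}} \subseteq C_\emptyset$, and likewise for $x_j$), a De Morgan computation shows the claimed identity is equivalent to $C_{\{x_i,x_j\}} = C_{\{x_i\}} \cap C_{\{x_j\}}$. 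The inclusion $C_{\{x_i,x_j\}} \subseteq C_{\{x_i\}}\cap C_{\{x_j\}}$ is exactly this monotonicity, so the whole theorem reduces to the reverse inclusion: any entity that fails under $\{x_i\}$ and fails under $\{x_j\}$ must also fail under $\{x_i,x_j\}$.

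I would prove the contrapositive using the following characterization of survival, which is the structural heart of the argument. Form the directed support graph $G$ on $A \cup B$ with an edge $y \to x$ whenever $y$ appears in the IDR of $x$; Case III is precisely the regime in which every minterm has size one, so each dependency is a plain edge rather than a hyperedge. I claim that, for any hardening set $H$, an entity $z$ survives iff either $z$ is already alive in the baseline cascade, or there is a directed path in $G$ from some $h \in H$ to $z$ all of whose vertices, except possibly $h$ itself, lie outside $I$. The forward direction follows because in a disjunction $z \leftarrow \sum_q y_q$ a single surviving disjunct keeps $z$ alive; propagating this along the path from $h$ keeps every vertex on it alive. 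The reverse direction is proved by tracing back from a surviving, baseline-dead $z$ through one surviving disjunct at each step: each predecessor is itself baseline-dead (otherwise $z$ would inherit baseline-alive support and be baseline-alive), lies outside $I$, and the trace cannot cycle (a support cycle among baseline-dead vertices outside $I$ would survive on its own, contradicting baseline-death), so in a finite graph it must terminate at a hardened vertex.

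Granting this characterization, the reverse inclusion is immediate. Suppose $z$ survives under $H=\{x_i,x_j\}$ but is not baseline-alive. Then its witnessing path begins at a single hardened vertex, which is either $x_i$ or $x_j$; that one vertex, together with the very same path, already witnesses survival of $z$ under the singleton hardening set containing it. Hence $z$ survives under $\{x_i\}$ or under $\{x_j\}$, i.e. $z \notin C_{\{x_i\}} \cap C_{\{x_j\}}$, which is the contrapositive of the reverse inclusion. Combining this with the easy direction and the De Morgan reduction completes the proof. Equivalently and more conceptually, the surviving set under $H$ equals the baseline-alive set together with the vertices reachable from $H$ through $(A\cup B)\setminus I$, and reachability from a union of sources is the union of the reachabilities; the protection map therefore distributes over the union of singleton hardening sets.

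I expect the main obstacle to be a careful proof of the survival-by-reachability characterization rather than the concluding combinatorial step. The delicate points are handling entities that fail at $t=0$ but lie on would-be support paths (they never recover, so admissible paths must avoid $I$), cleanly separating the baseline-alive entities so they are not spuriously counted inside the protection sets, and ruling out support cycles during the back-trace. It is also worth emphasizing where Case III is essential: once a minterm has size greater than one the dependency structure becomes a hypergraph and survival requires an entire minterm of supporters to survive simultaneously, so survival is no longer simple reachability and the union identity genuinely fails, consistent with the harder complexity established for Case II and the general case.
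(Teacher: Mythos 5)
Your proof is correct, but it takes a genuinely different and more structural route than the paper's. The paper argues by contradiction in essentially one step: if some entity $x_p$ were protected only by hardening $x_i$ and $x_j$ jointly, then its IDR would have to contain an entity $x_m \in P(x_i|A'\cup B')$ and an entity $x_n \in P(x_j|A'\cup B')$, and since every minterm in Case III has size one, protecting either of $x_m,x_n$ alone already protects $x_p$, a contradiction; the easy inclusion $P(x_i|A'\cup B') \cup P(x_j|A'\cup B') \subseteq P(x_i,x_j|A'\cup B')$ is dispatched by monotonicity. As written, that one-step argument quietly assumes that the supporters keeping $x_p$ alive under joint hardening can themselves be attributed to $x_i$ alone or $x_j$ alone --- which is precisely the multi-step propagation your back-trace makes explicit. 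Your route --- rewriting protection sets as $C_\emptyset \setminus C_H$, reducing via De Morgan and monotonicity to $C_{\{x_i,x_j\}} = C_{\{x_i\}} \cap C_{\{x_j\}}$, and proving a survival-iff-reachability characterization in the support digraph, with the steady-state argument guaranteeing a surviving supporter at each back-step and the cycle-exclusion argument forcing termination at a hardened vertex --- is longer but supplies the induction the paper elides, and it buys more than the theorem asks for: reachability from a union of sources is the union of reachabilities, so the protection map distributes over arbitrary unions of hardening sets, not just pairs, and the lemma isolates exactly why the identity breaks once a minterm has size greater than one (survival becomes hyperpath rather than path reachability, consistent with the hardness of Case II). In short, the paper's proof buys brevity; yours buys rigor and a reusable structural description of protection sets in Case III.
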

%\vspace{-0.091in}
\begin{proof}
Assume that defending two entities $x_i$ and $x_j$ would result in preventing failure of $P(x_i,x_j|A' \cup B')$ entities with $|P(x_i|A' \cup B') \cup P(x_j|A' \cup B')| < |P(x_i,x_j|A' \cup B')|$. Then there exist at least one entity $x_p \notin P(x_i|A' \cup B') \cup P(x_j|A' \cup B')$ such that it's failure is prevented only if $x_i$ and $x_j$ is protected together. So two entities $x_m$ and $x_n$ (with $x_m \in P(x_i|A' \cup B')$ and $x_n \in P(x_j|A' \cup B')$ or vice versa) have to be present in the IDR of $x_p$. As the IDRs are of form Case III so if any one of $x_m$ or $x_n$ is protected then $x_p$ is protected, hence a contradiction. On the other way round $P(x_i,x_j|A' \cup B')$ contains all entities which would be prevented from failure if $x_i$ or $x_j$ is defended alone. So it directly follows that $|P(x_i|A' \cup B') \cup P(x_j|A' \cup B')| > |P(x_i,x_j|A' \cup B')|$ is not possible. Hence the theorem holds. 
\end{proof}

\begin{theorem}
\label{CaseIIIappx2}
There exists an $1-\frac{1}{e}$ approximation algorithm that approximates the ENH problem for Case III.
\end{theorem}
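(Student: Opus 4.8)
The plan is to recognize the ENH problem for Case III as an instance of maximizing a monotone submodular function subject to a cardinality constraint, and then invoke the classical Nemhauser--Wolsey--Fisher result that the greedy algorithm achieves a $1-\frac{1}{e}$ approximation for such problems. Concretely, I would define the objective function $f : 2^{A \cup B} \to \mathbb{Z}_{\geq 0}$ by $f(\mathcal{H}) = |P(\mathcal{H} \mid A' \cup B')|$, the number of entities whose failure is prevented by hardening the set $\mathcal{H}$ when $A' \cup B'$ fails at $t=0$. Maximizing the number of entities saved subject to $|\mathcal{H}| \leq k$ is exactly equivalent to the optimization version of ENH, since the number of failed entities is a fixed total minus $f(\mathcal{H})$.

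First I would establish that $f$ is monotone nondecreasing, which is immediate: hardening additional entities can only prevent more failures, so $\mathcal{H}_1 \subseteq \mathcal{H}_2$ implies $f(\mathcal{H}_1) \leq f(\mathcal{H}_2)$. Next, and this is the crux, I would show that $f$ is submodular, i.e. for all $\mathcal{H}_1 \subseteq \mathcal{H}_2$ and any $x \notin \mathcal{H}_2$,
\[
f(\mathcal{H}_1 \cup \{x\}) - f(\mathcal{H}_1) \;\geq\; f(\mathcal{H}_2 \cup \{x\}) - f(\mathcal{H}_2).
\]
The key structural fact that makes this work is Theorem \ref{CaseIIIappx1}, which for Case III gives the crucial \emph{additivity} property $P(x_i \mid A' \cup B') \cup P(x_j \mid A' \cup B') = P(x_i, x_j \mid A' \cup B')$. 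By iterating this pairwise union property I would argue that the protection set of a whole hardened set decomposes as the union of individual protection sets, $P(\mathcal{H} \mid A' \cup B') = \bigcup_{x \in \mathcal{H}} P(x \mid A' \cup B')$, so that $f$ is precisely a coverage function. Coverage functions are well known to be monotone submodular, which then yields submodularity of $f$ directly.

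Having cast ENH (Case III) as monotone submodular maximization under a cardinality constraint $k$, I would apply the greedy algorithm: starting from $\mathcal{H} = \emptyset$, repeatedly add the entity yielding the largest marginal increase in $f$ until $|\mathcal{H}| = k$. The Nemhauser--Wolsey--Fisher theorem then guarantees that the greedy solution satisfies $f(\mathcal{H}_{\mathrm{greedy}}) \geq (1 - \frac{1}{e}) f(\mathcal{H}^{*})$, where $\mathcal{H}^{*}$ is the optimal hardening set, giving the claimed $1 - \frac{1}{e}$ approximation. Since the protection set of each entity is computable in $\mathcal{O}((n+m)^2)$ as noted after the definition, each greedy step is polynomial and the overall algorithm is efficient.

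The main obstacle I anticipate is the rigorous justification of the additivity property at the level of \emph{sets} of hardened entities rather than just pairs. Theorem \ref{CaseIIIappx1} is stated only for two entities, and there is a subtlety in the cascade dynamics: one must verify that no entity is saved only through the joint hardening of three or more entities in a way that the pairwise decomposition fails to capture. I would handle this by a careful induction on $|\mathcal{H}|$, reusing the Case III argument that every IDR is a disjunction of singletons, so that protecting any single entity appearing in an entity's IDR suffices to protect it; hence an entity's membership in a protection set is determined independently by each hardened entity, which rules out genuinely synergistic (superadditive) protection and secures the coverage-function structure needed for submodularity.
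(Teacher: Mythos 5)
Your proposal is correct and takes essentially the same route as the paper: the paper likewise uses Theorem \ref{CaseIIIappx1} to decompose the protection set of a hardened set into the union of individual protection sets, reduces ENH (Case III) to the Maximum Coverage problem, and invokes the standard $1-\frac{1}{e}$ guarantee, which is exactly your coverage-function/monotone-submodular greedy argument in different dress. If anything you are more careful than the paper, which asserts the set-level identity $P(X \mid A' \cup B') = \bigcup_{x_i \in X} P(x_i \mid A' \cup B')$ directly from the pairwise Theorem \ref{CaseIIIappx1} without the induction on $|\mathcal{H}|$ that you explicitly sketch.
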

%\vspace{-0.091in}
\begin{proof}
The approximation algorithm is constructed by modeling the problem as Maximum Coverage problem. An instance of the maximum coverage problem consists of a set $S=\{x_1,x_2,...,x_n\}$, a set $\mathcal{S}=\{S_1,S_2,...,S_m\}$ where $S_i \subseteq S$ and a positive integer $M$. The objective of the problem is to find a set $S' \subseteq S$ and $|S'| \le M$ such that $\cup_{S_i \in \mathcal{S}} S_i$ is maximized. For a given initial failure set $A' \cup B'$ with $|A'| + |B'| \le \mathcal{K}$, let $P(x_i|A' \cup B')$ denote the protection set for each entity $x_i \in A \cup B$. We construct a set $S=A \cup B$ and for each entity $x_i$ a set $S_{x_i} \subseteq S$ such that $S_{x_i} =  P(x_i|A' \cup B')$. Each set $S_{x_i}$ is added as an element of a set $\mathcal{S}$. The conversion of the problem to Maximum Coverage problem can be done in polynomial time. By Theorem \ref{CaseIIIappx1} defending a set of entities $X \subseteq S$ would result in failure prevention of $\cup_{x_i \in X} S_{x_i}$ entities. Hence, with the constructed sets $S$ and $\mathcal{S}$ and a positive integer $M$ (with $M=k$) finding the Maximum Coverage would ensure the failure protection of maximum number of entities in $A \cup B$. This is same as the ENH problem of Case III. As there exists an $1-\frac{1}{e}$ approximation algorithm for the Maximum Coverage problem hence the theorem holds. 
\end{proof}
%\vspace{-0.15in}
\subsection{Case IV: Problem Instance with an Arbitrary Number of Minterms of Arbitrary Size}
The IDRs of Case IV have arbitrary number of minterm of arbitrary size. This can be represented as $x_{i} \leftarrow \sum^{p}_{j_1=1} \prod^{q_{j_1}}_{j_2=1}y_{j_2}$, where $x_{i}$ and $y_{j_2}$ are entities of network $A(B)$ and $B(A)$ respectively and there are $p$ minterms each of size $q_{j_1}$. 

\begin{theorem}\label{CaseIV} The Entity Hardening problem for Case IV is NP Complete 
\end{theorem}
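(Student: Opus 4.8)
The plan is to establish NP-completeness in the standard two parts: membership in NP, and NP-hardness obtained by observing that Case IV syntactically contains an already-hard special case.

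First, for membership in NP, I would take the hardening set $\mathcal{H}$ itself as the polynomial-size certificate. Given a candidate $\mathcal{H}$ with $|\mathcal{H}| \le k$, a verifier simulates the time-stepped failure cascade that starts from the initial failure of $A' \cup B'$, treating every entity of $\mathcal{H}$ as permanently operational regardless of the IDRs. Since a failed entity never recovers, at least one new entity must fail at each non-terminal step, so the cascade reaches steady state within $n = |A| + |B|$ time steps, and each step only requires evaluating the Boolean IDRs, which is polynomial. The verifier then counts the entities that have failed at steady state and accepts if and only if this count is at most $E_F$. This runs in polynomial time, placing ENH for Case IV in NP.

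Second, for NP-hardness, I would observe that Case IV strictly generalizes Case II: an IDR of the form $x_i \leftarrow \prod_{j=1}^{p} y_j$ (a single minterm of arbitrary size) is exactly a Case IV IDR in which the number of minterms equals $1$. Hence every instance of the ENH problem built from Case II type IDRs is, verbatim, a legal instance of the ENH problem with Case IV type IDRs, and the two instances have identical yes/no answers. This yields a trivial identity reduction from the Case II problem to the Case IV problem in polynomial time. Since ENH for Case II is NP-complete by Theorem \ref{CaseII}, it is NP-hard, and therefore ENH for Case IV is NP-hard as well. (Equivalently one could invoke Theorem \ref{CaseIII}, since every minterm of size one is likewise a syntactically legal Case IV IDR.)

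Combining the two parts gives that the ENH problem for Case IV is NP-complete. I expect the hardness direction to be essentially immediate, because the Case II reduction already produces IDRs that are valid Case IV IDRs, so no new gadget is needed; the only point that requires genuine care is the NP-membership argument, where one must justify that the hardened cascade terminates in polynomially many steps and that the simulation correctly enforces the permanent operability of the entities in $\mathcal{H}$.
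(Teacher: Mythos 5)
Your proof is correct and follows essentially the same route as the paper, whose entire argument is the observation that Case II (and Case III) IDRs are special cases of Case IV, so NP-hardness is inherited from Theorem \ref{CaseII}. Your explicit NP-membership argument (cascade simulation terminating within $|A|+|B|$ steps with hardened entities held operational) is a detail the paper leaves implicit, and it is a welcome addition rather than a deviation.
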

%\vspace{-0.091in}
\begin{proof}
Case II and Case III are special cases of Case IV. Hence following from Theorem \ref{CaseII} and Theorem \ref{CaseIII} the computational complexity of the Entity Hardening problem is NP-complete in Case IV.
\end{proof}
\section{Solutions to the Entity Hardening Problem}
\label{Solutions}
\subsection{Optimal Solution using Integer Linear Programming}
%Due to lack of space we omit the ILP formulation to find the optimal solution to the ENH problem. The interested reader is directed to \cite{archive} for the formulation.

We propose an Integer Linear Program (ILP) that solves the Entity Hardening problem optimally. Let $[G,H]$ with $G=\{g_1,g_2,...,g_n\}$ and $H=\{h_1,h_2,...,h_m\}$ denote the entities in set $A$ and $B$ respectively with $h_i=0$ ($g_j=0$) if entity $a_i$ ($b_j$) is alive and $h_i=1$ ($g_j=1$) otherwise. Given an integer $k$ let $[G,H]$ be the solution (with value of $1$ corresponding to entities failed initially) that cause maximum number of entity failure. Two variables $x_{id}$ and $y_{jd}$ are used in the ILP with $x_{id} = 1$ ($y_{jd} = 1$), when entity $a_{i} \in A$ ($b_{j} \in B$) is in a failed state at time step $d$, and $0$ otherwise. The number of entities to be defended is considered to be $k$. It is to be noted that the maximum number cascading steps is upper bounded by $|A|+|B|-1=m+n-1$. The objective function can now be formulated as follows:\\
\noindent
\begin{equation}\label{eqn:ilpobj1}
min \Big(\overset{m}{\underset{i=1}{\sum}}x_{i(m+n-1)}+\overset{n}{\underset{j=1}{\sum}}y_{j(m+n-1)} \Big)
\end{equation}

\noindent
The objective in (\ref{eqn:ilpobj1}) minimizes the number of entities failed after the cascading failure with the respective constraints for the Entity Hardening problem as follows:

\noindent
{\em Constraint Set 1}: $\sum\limits_{i=1}^{n} q_{x_i} + \sum\limits_{j=1}^{m} q_{y_j} = k$ , with $q_{xi},q_{yj} \in [0,1]$. If an entity $x_i$ ($y_j$) is defended then $q_{x_i} = 1$ ($q_{y_j} =1$) and $0$ otherwise. \\ \\
\noindent
{\em Constraint Set 2:} $x_{i0} \ge g_{i} - q_{x_i}$ and $y_{i0} \ge h_{i} - q_{y_i}$. This constraint implies that only if an entity is not defended and $g_i$ ($h_i$) is $1$ then the entity will fail at the initial time step.\\
\noindent \\
{\em Constraint Set 3}: $x_{id} \geq x_{i(d-1)}, \forall d, 1 \leq d \leq m+n-1$, and  $y_{id} \geq y_{i(d-1)}, \forall d, 1 \leq d \leq m+n-1$, in order to ensure that for an entity which fails in a particular time step would remain in failed state at all subsequent time steps.\\
\noindent \\
{\em Constraint Set 4}: Modeling of the constraint to capture the cascade propagation for IIM is similar to the constraints established in \cite{sen2014identification}. A brief presentation of this constraint is provided here. Consider an IDR ${a_i} \leftarrow {b_j}{b_p}{b_l} + {b_m}{b_n} + {b_q}$ of type Case IV. The following steps are enumerated to depict the cascade propagation:

\vspace{0.02in}
\noindent
{\em Step 1:} Replace all minterms of size greater than one with a variable. In the example provided we have the transformed minterm as ${a_i} \leftarrow c_1 + c_2 + b_q$ with $c_1 \leftarrow {b_j}{b_p}{b_l}$ and $c_2 \leftarrow {b_m}{b_n}$ ($c_1,c_2 \in \{0,1\}$) as the new IDRs. Note that after transformation, the original IDR is in the form of Case III and the introduced IDRs are in the form of Case II.

\vspace{0.02in}
\noindent
{\em Step 2:} For each variable $c$, a constraints is added to capture the cascade propagation. Let $N$ be the number of entities in the minterm on which $c$ is dependent. In the example for the variable $c_1$ with IDR $c_1 \leftarrow {b_j}{b_p}{b_l}$, constraints $c_{1d} \geq \frac{y_{j(d-1)} + y_{p(d-1)}+y_{l(d-1)}}{N}$ and $c_{1d} \le y_{j(d-1)} + y_{p(d-1)}+y_{l(d-1)} \forall d, 1 \leq d \leq m+n-1$ are introduced (with $N=3$ in this case). If IDR of an entity is already in form of Case II, i.e.,$a_i \leftarrow {b_j}{b_p}{b_l}$ then constraints $x_{id} \geq \frac{y_{j(d-1)}+y_{p(d-1)}+ y_{l(d-1)}}{N} - q_{x_i}$ and $x_{id} \le y_{j(d-1)} + y_{p(d-1)}+y_{l(d-1)}  \forall d, 1 \leq d \leq m+n-1$ are introduced (with $N=3$). These constraints satisfies that if the entity $x_i$ is hardened initially then it is not dead at any time step. 

\vspace{0.02in}
\noindent
{\em Step 3:} Let $M$ be the number of minterms in the transformed IDR as described in Step 1. In the given example with IDR ${a_i} \leftarrow c_1 + c_2 + b_q$  constraints of form $x_{id} \geq c_{1(d-1)} + c_{2(d-1)} + y_{q(d-1)}-(M-1)-q_{x_i}$ and $x_{id} \leq \frac{c_{1(d-1)} + c_{2(d-1)} + y_{q(d-1)}}{M} \forall d, 1 \leq d \leq m+n-1$ are introduced. These constraints ensures that even if all the minterms of $x_i$ has at least one entity in dead state then it will be alive if the entity is hardened initially. For all IDRs of type Case I and Case III, the constraint discussed in this step is used.

\subsection{Heuristic}
In this subsection we provide a greedy heuristic solution to the Entity Hardening problem. For an interdependent network $\mathcal{I}(A,B,\mathcal{F}(A,B))$ with the initial failed set of entities as $A' \cup B'$, \emph{Protection Set} of each entity has been defined in the approximation scheme of Case III. To design the heuristic we define \emph{Minterm Coverage Number} of each entity in $A \cup B$ as follows:
%\vspace{-0.12in}
\begin{definition}
\textit{For an entity $x_i \in A \cup B$ the Minterm Coverage Number is defined as the number of minterms that can be removed from $\mathcal{F}(A,B)$ without affecting the cascading process by hardening the entity $x_i$ when all entities in $A' \cup B'$ fails initially. This is represented as $M(x_i|A' \cup B')$.}
\end{definition}
%\vspace{-0.12in}

Similar to the computation of \emph{Protection Set} the \emph{Minterm Coverage Number} of each entity can be computed in $\mathcal{O}((n+m)^2)$. With these definitions the heuristic is given in Algorithm \ref{algHeu}. The algorithm takes in as input an interdependent network $\mathcal{I}(A,B,\mathcal{F}(A,B))$ with $S=A \cup B$. Step 4-5 is done to reduce the search space as it directly follows that the set of entities in $Q$ wouldn't effect the hardening process. In each iteration of the while loop an entity $x_d$ is greedily selected which when hardened would prevent failure of maximum number of entities. This ensures that at each step the number of entities failed is minimized. In case of a tie, among all entities involved in the tie, the entity having the highest Minterm Coverage Number is included in the solution. This gives a higher priority to the entity which when hardened, has more impact on failure minimization in subsequent iterations of the while loop. The interdependent network $\mathcal{I}(A,B,\mathcal{F}(A,B))$ is updated in steps 13-16 of the algorithm. This takes into account the effect of hardening an entity in the current iteration on entities hardened in the following iterations. 

\begin{algorithm}
\small	
	\KwData{An interdependent network $\mathcal{I}(A,B,\mathcal{F}(A,B))$ (with $S=A \cup B$), set of entities $A' \cup B'$ failed initially causing maximum failure in the interdependent network with $|A'|+|B'| = \mathcal{K}$ and hardening budget $k$. 		
		}
	\KwResult{Set of hardened entities $\mathcal{H}$.
		}
	\Begin{	
			Initialize $S' \leftarrow A' \cup B'$ \;
			Initialize $\mathcal{H}=\emptyset$\;
			Update $\mathcal{F}(A,B)$ as follows --- (a) let $Q$ be the set of entities that does not fail on failing 	                                $\mathcal{K}$ entities, (b) remove IDRs corresponding to entities in set $Q$, (c) remove from minterm  
                                of entities not in set $Q$ all entities which are in set $Q$ \;
			Update $S =S \setminus Q$ \;
			\While {($k$ entities are not hardened)} {
			For each entity $x_i \in S$ compute the Protection Set $P(x_i|S')$\;	
			Choose the entity $x_d$ with highest cardinality of the set $|P(x_d|S')|$\;
			\If {(more than one entity has the same highest cardinality value)} {
				For each such entity $x_j$ compute the Minterm Coverage Number $M(x_j|S')$ \;
				Choose the entity $x_d$ with highest Minterm Coverage Number. \; 
				In case of a tie choose arbitrarily\;
				}
			Update $S \leftarrow S-P(x_d|S')$\;
			Update $\mathcal{F}(A,B)$ by removing (i) IDRs corresponding to all entities in $P(x_d|S')$, and (ii) occurrence of these entities in IDRs of entities not in $P(x_d|S')$\;
			\If {($x_d \in S'$)} {
				Update $S' \leftarrow S'-\{x_d\}$\;
				}
			Update $\mathcal{H}=\mathcal{H} \cup x_d$ \;
			}	
	}
	\textbf{return} $\mathcal{H}$ \;		
\caption{Heuristic Solution to the ENH Problem}
\label{algHeu}
\end{algorithm}

\noindent
\emph{Run Time Analysis of Algorithm \ref{algHeu}}: For this analysis we consider $n$ to be the total number of entities and $m$ to be the total number of minterms. Updates in step 4 can be done in $\mathcal{O}(m)$ and step 5 in $\mathcal{O}(n)$. The while loop iterates for $k$ times. In each iteration of the while loop step 7 and step 8 takes at most $\mathcal{O}((n+m)^2)$ and $\mathcal{O}(n log(n))$ time respectively. On branching in step 9, step 10 and step 11 takes $\mathcal{O}((n+m)^2)$ and $\mathcal{O}(n log(n))$ time respectively. Updates in step 13 takes $\mathcal{O}(n)$ time and in step 14 takes $\mathcal{O}(n+m)$ time. Step 12, 16 and 17 runs in constant time. Hence Algorithm \ref{algHeu} runs in $\mathcal{O}(k(n+m)^2)$ time. 
\begin{figure*}[!htb]
	\centering
	\begin{subfigure}[t]{0.6\columnwidth}
		\centering
		\includegraphics[width=\columnwidth, keepaspectratio]{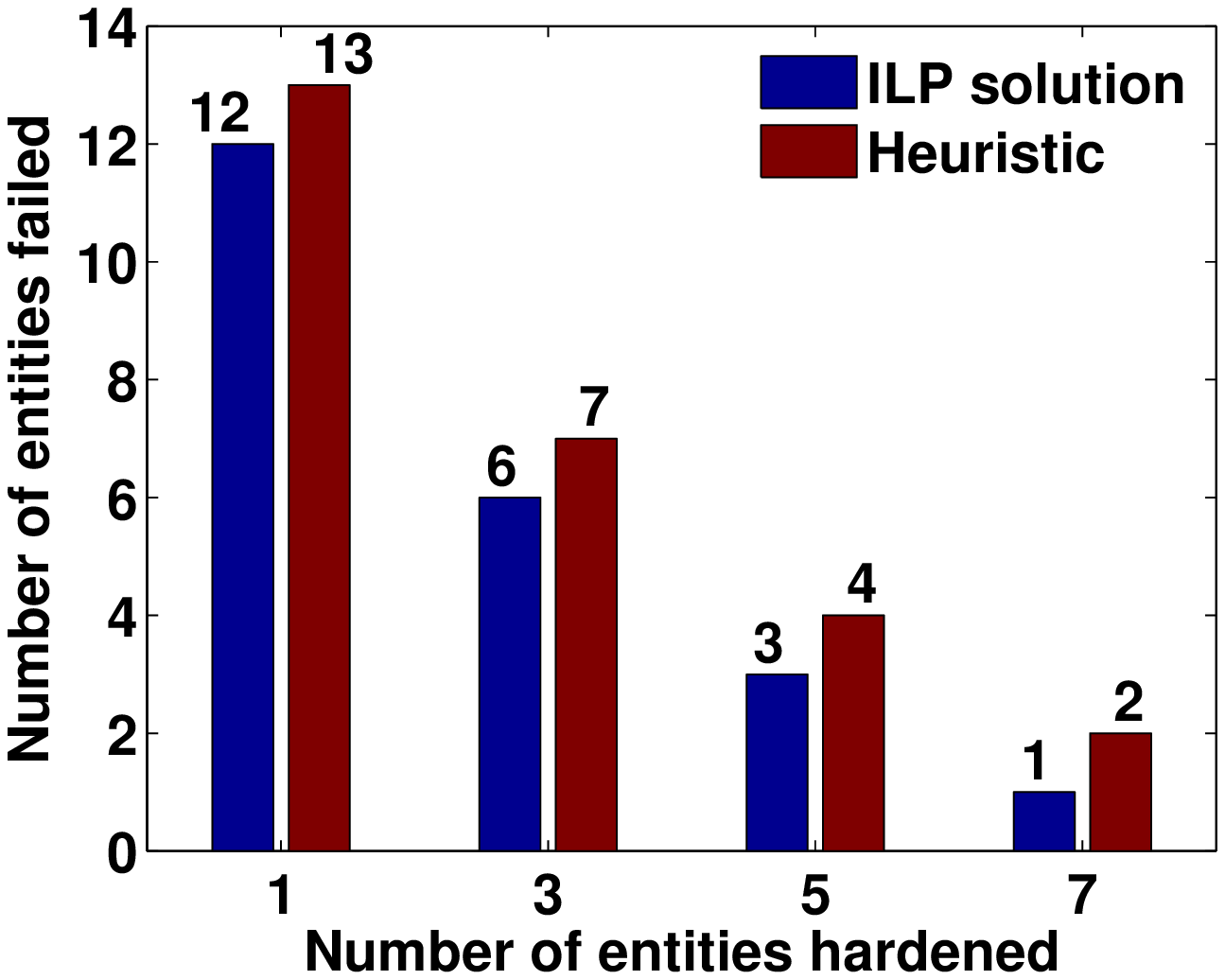}
		\caption{Region 1}\label{fig:plota}
	\end{subfigure}	
	\begin{subfigure}[t]{0.6\columnwidth}
		\centering
		\includegraphics[width=\columnwidth, keepaspectratio]{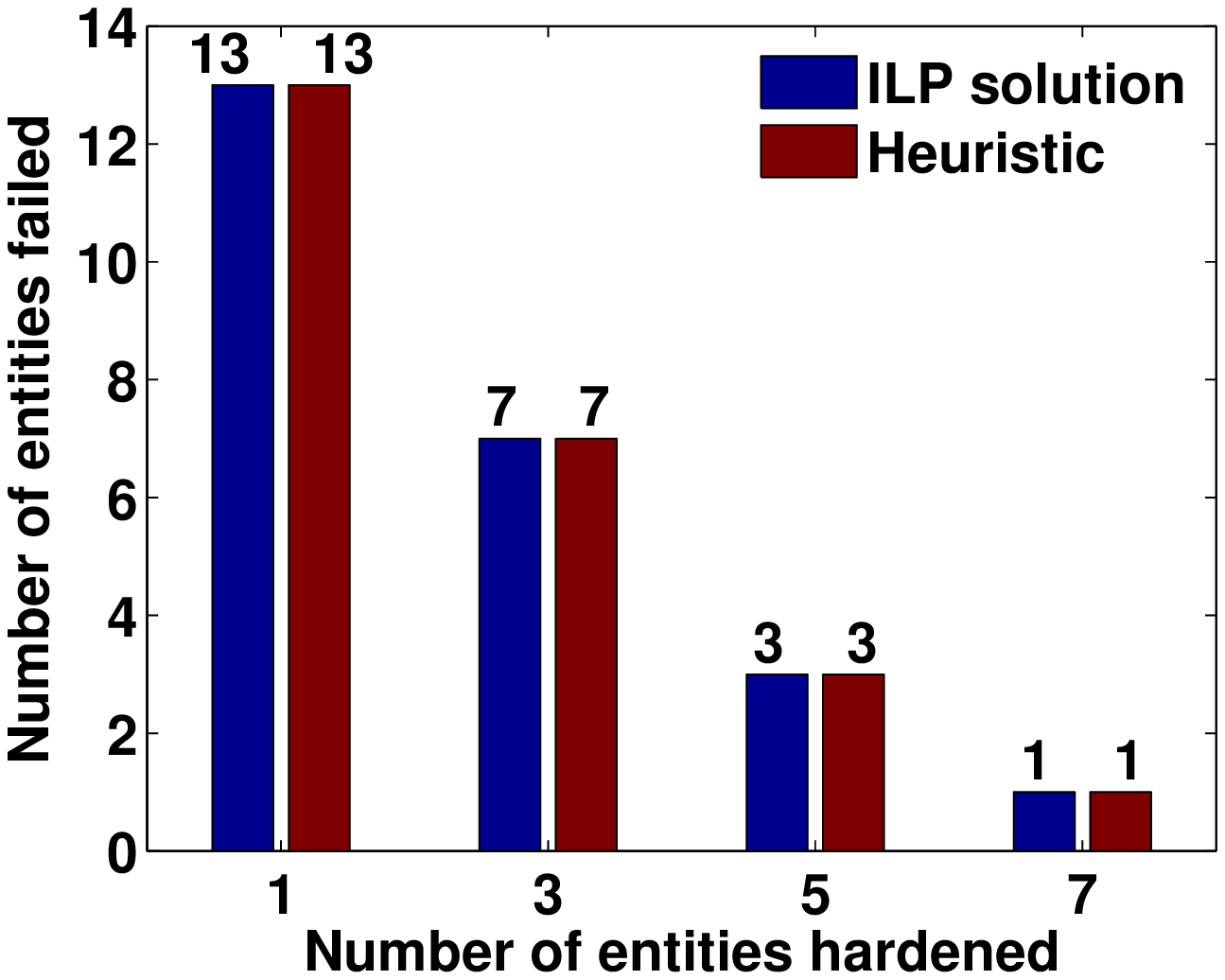}
		\caption{Region 2}\label{fig:plotb}
	\end{subfigure}
	\begin{subfigure}[t]{0.6\columnwidth}
		\centering
		\includegraphics[width=\columnwidth, keepaspectratio]{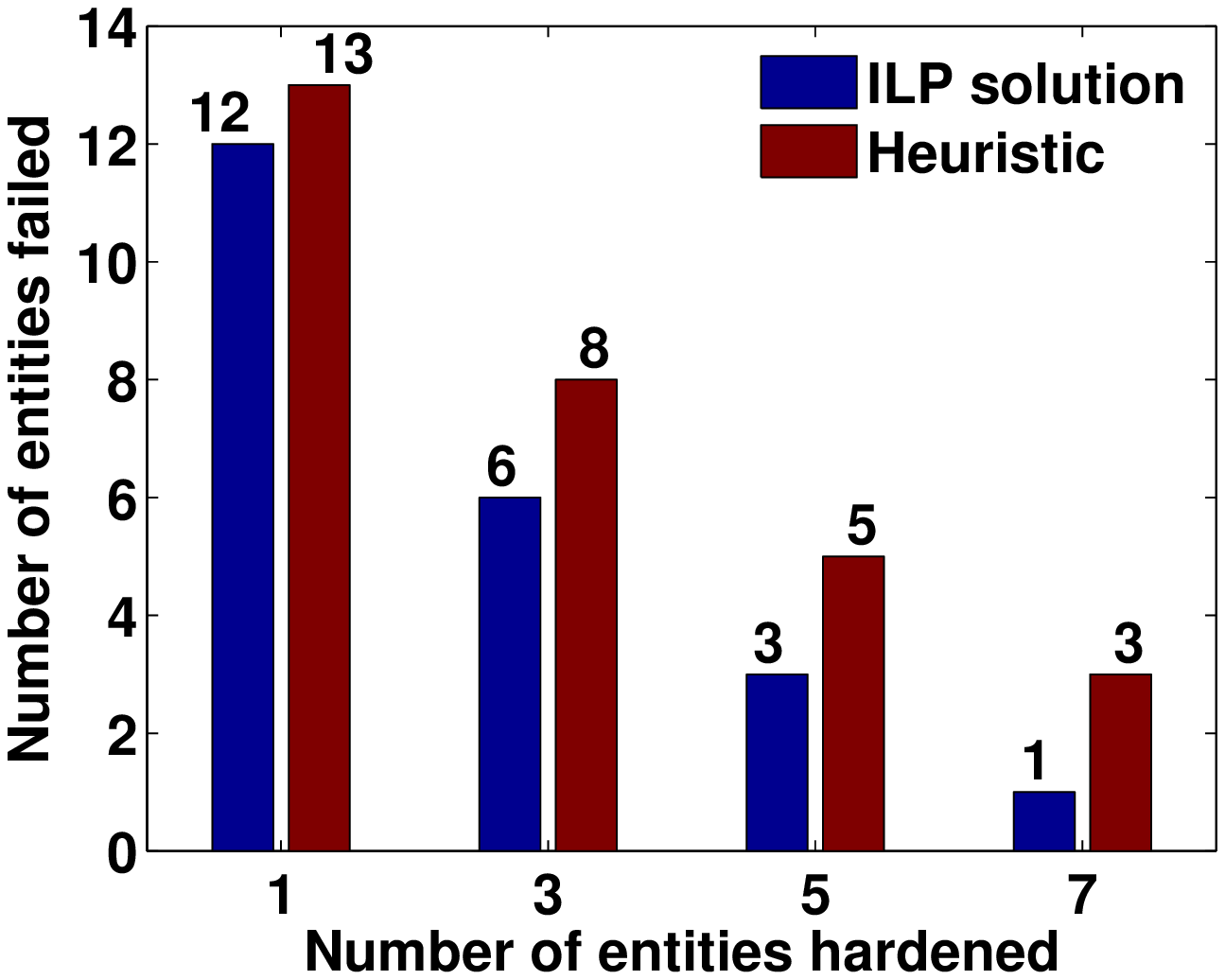}
		\caption{Region 3}\label{fig:plotc}
	\end{subfigure}
	\begin{subfigure}[t]{0.6\columnwidth}
		\centering
		\includegraphics[width=\columnwidth, keepaspectratio]{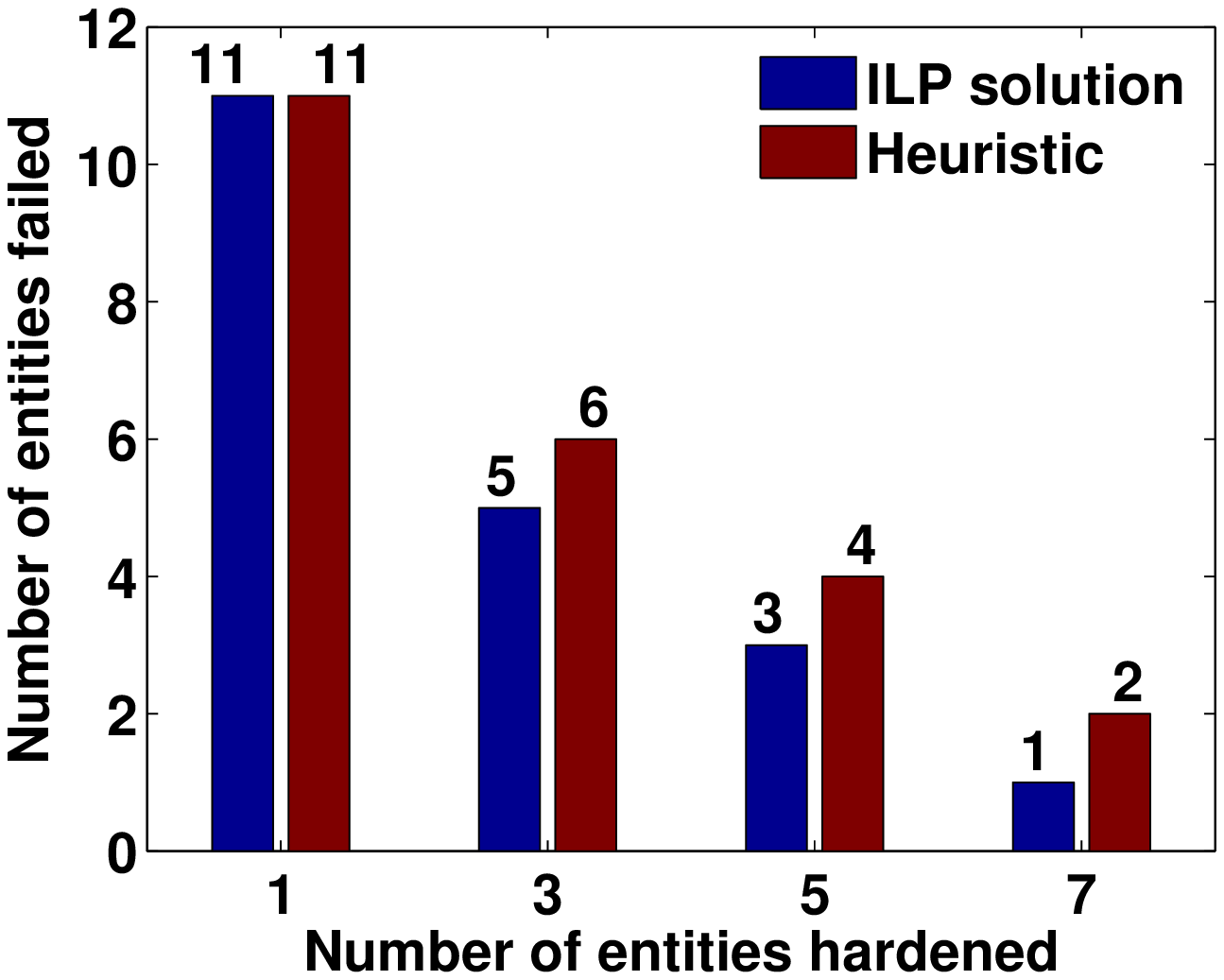}
		\caption{Region 4}\label{fig:plotd}
	\end{subfigure}
	\begin{subfigure}[t]{0.6\columnwidth}
		\centering
		\includegraphics[width=\columnwidth, keepaspectratio]{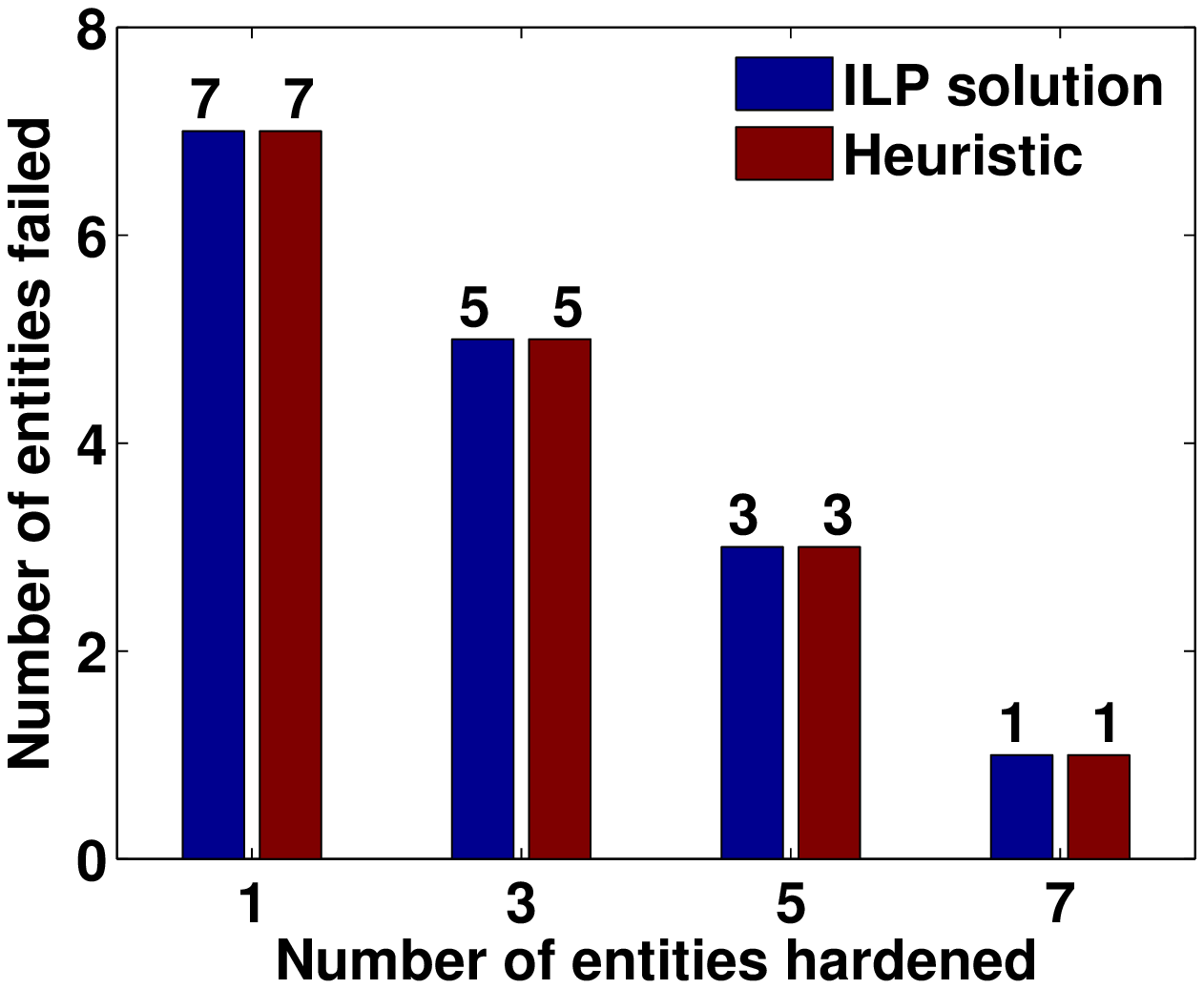}
		\caption{Region 5}\label{fig:plote}
	\end{subfigure}
	\caption{Comparison chart of the optimal solution (ILP) with the heuristic by varying number of entities hardened for each identified region}\label{fig:plot}
\end{figure*}

%\begin{figure*}[!htb]
%\hfill
%\begin{center}
  %\subfigure[Region 1]{\includegraphics[width=0.3\textwidth, keepaspectratio]{figure/Data1.eps}}
  %\subfigure[Region 2]{\includegraphics[width=0.3\textwidth, keepaspectratio]{figure/Data2.eps}}
  %\subfigure[Region 3]{\includegraphics[width=0.3\textwidth, keepaspectratio]{figure/Data3.eps}}
  %\subfigure[Region 4]{\includegraphics[width=0.3\textwidth, keepaspectratio]{figure/Data4.eps}}
  %\subfigure[Region 5]{\includegraphics[width=0.3\textwidth, keepaspectratio]{figure/Data5.eps}}
	%\vspace{-0.05in}
  %\caption{Comparison chart of the optimal solution (ILP) with the heuristic by varying number of entities hardened for each identified region}
%\label{fig:plot}
%\end{center}
%\vspace{-0.1in}
%\end{figure*}
\section{Experimental Results}
\label{ExpRes}
In this section we present the experimental results of the Entity Hardening problem by comparing the optimal solution  computed using an ILP, and the proposed heuristic algorithm. The experiments were conducted on real world power grid data obtained from Platts (www.platts.com), and communication network data obtained from GeoTel (www.geo-tel.com) of Maricopa County, Arizona. The data consisted of $70$ power plants and $470$ transmission lines in the power network, and $2,690$ cell towers, $7,100$ fiber-lit buildings and $42,723$ fiber links in the communication network. We identified five non-intersecting geographical regions from the data set and labeled them from regions 1 through 5. For each of the regions, the entities of the power and communication network that were located within the geographic region formed the set $A$ and $B$ respectively. Each region was represented by an interdependent network $\mathcal{I}(A,B,\mathcal{F}(A,B))$. We use the IDR construction rules as defined in \cite{sen2014identification} to generate $\mathcal{F}(A,B)$.  

In all of our simulations IBM CPLEX Optimizer 12.5 to solve ILPs and Python 3 for heuristic is used. To analyze the Entity Hardening problem the value of $\mathcal{K}$ was set to $8$. The ILP in \cite{sen2014identification} was used to compute the $\mathcal{K}$ most vulnerable nodes in the network, and the set of failed entities due to the failure of the $\mathcal{K}$ entities was also computed. For the five regions, when the $\mathcal{K} = 8$ most vulnerable nodes failed, the total number of failed entities in the network were 28, 23, 28, 28 and 27 respectively. With the $\mathcal{K}$ most vulnerable nodes and final set of failed nodes as input, the ILP and heuristic of the Entity Hardening problem are compared with $k=1,3,5,7$. The results of these simulations are shown in Figure \ref{fig:plot}.
It is observed that the heuristic solution differs more from optimal at higher values of $k$ (factor of $0.5$ and $0.67$ for Regions 1 and 3 respectively with $k=7$). This is primarily because of the greedy nature of Algorithm 2. However on an average the heuristic solution differs by a factor of $0.13$ from the optimal.

%\FloatBarrier
\section{Conclusion}
\label{Conclusion}
In this paper we studied the entity hardening problem in multi-layer networks. We modeled the interdependencies shared between the networks using IIM, and formulated the the Entity Hardening problem in this setting. We showed that the problem is solvable in polynomial time for some cases, whereas for others it is NP-complete. We evaluated the efficacy of our heuristic using power and communication network data of Maricopa County, Arizona. Our experiments showed that our heuristic almost always produces near optimal results.

\begin{footnotesize}
\bibliographystyle{IEEEtran}
\bibliography{IEEEabrv,references,referencesBibToAdd}
\end{footnotesize}

\end{document}